\renewcommand{\t}{{\scriptscriptstyle\mathsf{T}}}
\newcommand{\setft}[1]{\mathrm{#1}}
\newcommand{\lin}[1]{\setft{L}\left(#1\right)}
\newcommand{\density}[1]{\setft{D}\left(#1\right)}
\newcommand{\trans}[1]{\setft{T}\left(#1\right)}
\def\complex{\mathbb{C}}
\def\real{\mathbb{R}}
\def\natural{\mathbb{N}}
\def\I{\mathbb{1}}
\newenvironment{mylist}[1]{\begin{list}{}{
    \setlength{\leftmargin}{#1}
    \setlength{\rightmargin}{0mm}
    \setlength{\labelsep}{2mm}
    \setlength{\labelwidth}{8mm}
    \setlength{\itemsep}{0mm}}}
    {\end{list}}
\def\ot{\otimes}
\def\CP{\mathrm{CP}}
\newcommand{\out}[2]{| #1\rangle\langle #2 |}
\newcommand{\Innerm}[3]{\left\langle #1 \left| #2 \right| #3 \right\rangle}
\newcommand{\defeq}{\stackrel{\smash{\textnormal{\tiny def}}}{=}}
\newcommand{\pa}[1]{(#1)}
\newcommand{\Pa}[1]{\left(#1\right)}
\newcommand{\set}[1]{\{#1\}}
\def\Jamiolkowski{J}
\newcommand{\jam}[1]{\Jamiolkowski\pa{#1}}
\DeclareMathOperator{\vectorize}{vec}
\newcommand{\col}[1]{\vectorize\pa{#1}}
\newcommand{\row}[1]{\vectorize\pa{#1}^{\dagger}}
\DeclareMathOperator{\trace}{Tr}
\newcommand{\Ptr}[2]{\trace_{#1}\Pa{#2}}
\newcommand{\Tr}[1]{\Ptr{}{#1}}
\def\cH{\mathcal{H}}
\def\cK{\mathcal{K}}
\def\bB{\mathbf{B}}
\def\bS{\mathbf{S}}
\def\rH{\mathrm{H}}
\def\rM{\mathrm{M}}
\def\rS{\mathrm{S}}
\def\sM{\mathscr{M}}
\newtheorem{thrm}{Theorem}[section]
\newtheorem{lem}[thrm]{Lemma}
\newtheorem{prop}[thrm]{Proposition}
\theoremstyle{definition}
\newtheorem{remark}[thrm]{Remark}
\numberwithin{equation}{section}
\newcounter{questionnumber}
\begin{document}

%================================================================================%
\title{Entropy, stochastic matrices, and quantum operations}
%================================================================================%

\author{Lin Zhang\footnote{E-mail: godyalin@163.com; linyz@hdu.edu.cn}\\[1mm]
  {\it\small Institute of Mathematics, Hangzhou Dianzi University, Hangzhou 310018,
  PR~China}}

\date{}
\maketitle

\begin{abstract}

The goal of the present paper is to derive some conditions on
saturation of (strong) subadditivity inequality for the stochastic
matrices. The notion of relative entropy of stochastic matrices is
introduced by mimicking quantum relative entropy. Some properties of
this concept are listed and the connection between the entropy of
the stochastic quantum operations and that of stochastic matrices
are discussed.\\~\\
\textbf{Keywords:} Quantum operation; entropy; stochastic matrix

\end{abstract}

%================================================================================%
\section{Introduction}
%================================================================================%

If the column vectors
$$
\mathbf{p} = [p_{1},\ldots,p_{N}]^{\t} \in \real^{N},\quad\mathbf{q}
= [q_{1},\ldots,q_{N}]^{\t} \in \real^{N}
$$
are two probability
distributions, the \emph{Shannon entropy} of $\mathbf{p}$ is defined
by
$$
\rH(\mathbf{p}) \defeq - \sum_{i=1}^{N} p_{i}\log_{2}p_{i},
$$
where $x\log_{2}x$ is set to $0$ if $x = 0$, and the \emph{relative
entropy} of $\mathbf{p}$ and $\mathbf{q}$ is defined by
$$
\rH(\mathbf{p}||\mathbf{q}) \defeq \sum_{i=1}^N
p_i\log_2\frac{p_i}{q_i}
$$
when $\mathbf{p}$ is absolute continuously with respect to
$\mathbf{q}$; $\rH(\mathbf{p}||\mathbf{q}) = +\infty$ otherwise.

Let $B = [b_{ij}]$ be a $N \times N$ \emph{bi-stochastic matrix},
that is, $b_{ij} \geqslant 0$, and
$$
\sum_{i=1}^{N} b_{ij} =
\sum_{j=1}^{N} b_{ij} = 1
$$
for each $i, j=1, \ldots, N$. Let $\pi$
be a permutation of the set $\set{1,\ldots,N}$. For any $i, j \in
\set{1,\ldots,N}$, we define $c_{ij} = 1$ when $i = \pi(j)$ and
$c_{ij} = 0$ when $i \neq \pi(j)$. Then the matrix $C = [c_{ij}]$ is
called a permutation matrix. Let $\bS_{N}$ be the set of all $N
\times N$ permutation matrices and $\bB_{N}$ be the convex hull
$\bB_{N}$ of $\bS_{N}$. The well-known Birkhoff-von Neumann theorem
indicates that $\bB_{N}$ is the set of all $N \times N$
bi-stochastic matrix.

We only consider finite dimensional complex Hilbert spaces. A state
$\rho$ of a quantum system, described by Hilbert space $\cH$, is a
positive semi-definite matrix of trace one and call it the
\emph{density matrix}. The set of all density matrices of $\cH$ is
denoted by $\density{\cH}$, if $\rho\in \density{\cH}$ is
invertible, then $\rho$ is said to be \emph{faithful}. If  $\rho$
and $\sigma$ are two quantum states, then the \emph{von Neumann
entropy} of $\rho$ is defined by
$$
\rS\pa{\rho} \defeq - \Tr{\rho\log_{2}\rho},
$$
the quantum \emph{relative entropy} between $\rho$ and $\sigma$ is
defined by
$$
\rS\pa{\rho||\sigma} \defeq \Tr{\rho(\log_{2}\rho -
\log_{2}\sigma)}
$$
if $\mathbf{supp}(\rho)\subseteq
\mathbf{supp}(\sigma)$; $\rS\pa{\rho||\sigma} = +\infty$ otherwise,
see \cite{Nielsen}.

Let $\cH$ and $\cK$ be two Hilbert spaces, $\lin{\cH, \cK}$ be the
set of all linear operators from $\cH$ to $\cK$, denote $\lin{\cH,
\cH}$ by $\lin{\cH}$. Let $\trans{\cH,\cK}$ denote the set of all
\emph{linear super-operators} from $\lin{\cH}$ to $\lin{\cK}$,
similarly, denote $\trans{\cH,\cH}$ by $\trans{\cH}$. We say that
$\Phi \in \trans{\cH,\cK}$ is \emph{completely positive} (CP) if for
each $k \in \natural$,
$$
\Phi \ot \I_{\rM_{k}(\complex)}: \lin{\cH}
\ot \rM_{k}(\complex) \to \lin{\cK}\ot \rM_{k}\pa{\complex}
$$
is
positive, where $\rM_{k}(\complex)$ is the set of all $k\times k$
complex matrices. It follows from the famous theorems of Choi
\cite{Choi} and Kraus \cite{Kraus} that $\Phi$ can be represented in
the form
$$
\Phi = \sum_{j}\mathbf{Ad}_{M_{j}},
$$
where $\set{M_j}_{j=1}^n
\subseteq \lin{\cH, \cK}$, that is,
$$\Phi(X) =
\sum_{j=1}^nM_jXM_j^\dagger,\, X \in \lin{\cH}.
$$
Throughout the
present paper, $\dagger$ means adjoint operation of some operator.
Denote by $\CP(\cH,\cK)(\CP(\cH))$ the set of all linear CP
super-operators in $\trans{\cH,\cK}(\trans{\cH})$.

The so-called \emph{quantum operation} is just a trace
non-increasing $\Phi \in \CP(\cH,\cK)$, if $\Phi$ is
trace-preserving, then it is called \emph{stochastic}; if $\Phi$ is
stochastic and unit-preserving, then it is called
\emph{bi-stochastic}.

The famous \emph{Jamio{\l}kowski isomorphism}
$$
J :
\trans{\cH}\longrightarrow\lin{\cH\ot\cH}
$$
transforms each
$\Phi\in\trans{\cH}$ into an operator $\jam{\Phi} \in
\lin{\cH\ot\cH}$, where
$$
\jam{\Phi} = \Phi\ot\I_{\lin{\cH}}(\col{\I_{\cH}}\row{\I_{\cH}}).
$$
If $\Phi \in \CP(\cH)$, then $\jam{\Phi}$ is a positive
semi-definite operator, in particular, if $\Phi$ is stochastic, then
$\frac1N\jam{\Phi}$ is a state on $\cH\ot\cH$. If $\Phi \in
\CP(\cH)$ is a stochastic quantum operation, we denote the von
Neumann entropy $\rS(\frac1N\jam{\Phi})$ of $\frac1N\jam{\Phi}$ by
$\rS^{\mathbf{map}}(\Phi)$ and call it the \emph{map entropy}
\cite{Roga}, which describes the decoherence induced by the quantum
operation $\Phi$.

%=====================================================================================================================
\section{On saturation of classical relative entropy}
%=====================================================================================================================

In order to obtain the condition for saturation of classical
relative entropy, we need the following lemmas.

\begin{lem}\label{Lind}(\cite{Lindblad})
Let $\cH$ be a Hilbert space, $\rho$ and $\sigma$ be two states of
$\cH$. If $\Phi \in \CP(\cH)$ is stochastic, then
$\rS(\Phi(\rho)||\Phi(\sigma)) \leqslant \rS(\rho||\sigma)$.
\end{lem}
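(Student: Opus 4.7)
The plan is to reduce the inequality to monotonicity of relative entropy under the partial trace, via a Stinespring dilation of the stochastic completely positive map. Since $\Phi \in \CP(\cH)$ is stochastic, the Kraus decomposition $\Phi = \sum_j \mathbf{Ad}_{M_j}$ recalled in the introduction can be repackaged as an isometry into a larger space: there exist a Hilbert space $\cE$, a unit vector $\ket{0} \in \cE$, and a unitary $U$ on $\cH \ot \cE$ such that
$$
\Phi(X) = \ptr{\cE}{U\pa{X \ot \out{0}{0}}U^\dagger}, \quad X \in \lin{\cH}.
$$

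First I would check that two elementary transformations preserve the relative entropy exactly. For any state $\tau$ on $\cE$ the functional identity $\log(\alpha \ot \tau) = \log\alpha \ot \I_{\cE} + \I_{\cH} \ot \log\tau$ on the supports yields
$$
\rS\pa{\rho \ot \tau \,\|\, \sigma \ot \tau} = \rS\pa{\rho\|\sigma},
$$
and since $U\log(A)U^\dagger = \log\pa{UAU^\dagger}$ for any positive $A$ and unitary $U$, the cyclicity of the trace gives $\rS\pa{U\alpha U^\dagger \,\|\, U\beta U^\dagger} = \rS\pa{\alpha\|\beta}$. Applied along the dilation above, these two facts reduce the claim to the single inequality
$$
\rS\pa{\ptr{\cE}{\alpha} \,\|\, \ptr{\cE}{\beta}} \leqslant \rS\pa{\alpha\|\beta}
$$
for arbitrary $\alpha, \beta \in \density{\cH\ot\cE}$.

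This partial-trace monotonicity is the technical heart of the argument and the step I expect to be the main obstacle, since it is not accessible by elementary functional calculus alone. My route would be to invoke Lieb's concavity theorem, which asserts that for fixed $K$ and $s \in [0,1]$ the map $(A,B) \mapsto \Tr{K^\dagger A^s K B^{1-s}}$ is jointly concave on pairs of positive operators. Writing
$$
\rS\pa{\alpha\|\beta} = \lim_{s \to 1^{-}} \frac{1}{1-s}\Pa{\Tr{\alpha} - \Tr{\alpha^{s}\beta^{1-s}}},
$$
Lieb's theorem translates into joint convexity of $(\alpha,\beta)\mapsto\rS\pa{\alpha\|\beta}$.

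To finish, I would combine joint convexity with the twirling identity
$$
\ptr{\cE}{X} \ot \frac{\I_{\cE}}{\dim\cE} \;=\; \int_{\unitary{\cE}} \pa{\I_{\cH} \ot V}\, X\, \pa{\I_{\cH} \ot V^\dagger}\, dV,
$$
applying convexity inside the integral and invoking unitary invariance of $\rS(\cdot\|\cdot)$ shown above; the normalization factor $\frac{\I_{\cE}}{\dim\cE}$ on both sides cancels by the tensor-product identity, yielding the desired partial-trace monotonicity. Chaining the three reductions along the Stinespring dilation delivers $\rS\pa{\Phi(\rho)\|\Phi(\sigma)} \leqslant \rS\pa{\rho\|\sigma}$, as required.
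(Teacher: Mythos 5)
The paper offers no proof of this lemma at all: it is imported verbatim from Lindblad's 1975 paper as a black box, so there is no ``paper's proof'' to compare against. Your argument is a correct and essentially complete reconstruction of the classical proof, and it is in fact very close to the historical route: Stinespring dilation, exact invariance of $\rS(\cdot\|\cdot)$ under tensoring with a fixed ancilla state and under unitary conjugation, reduction to monotonicity under the partial trace, and finally joint convexity derived from Lieb's concavity theorem combined with the twirling identity $\ptr{\cE}{X} \ot \I_{\cE}/\dim\cE = \int (\I\ot V) X (\I\ot V^\dagger)\, dV$. Each step checks out: the limit formula $\rS(\alpha\|\beta)=\lim_{s\to1^-}(1-s)^{-1}\bigl(\Tr{\alpha}-\Tr{\alpha^s\beta^{1-s}}\bigr)$ is the derivative of $s\mapsto\Tr{\alpha^s\beta^{1-s}}$ at $s=1$, each difference quotient is jointly convex by Lieb (with $K=\I$), and pointwise limits preserve convexity.

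Two small points deserve a sentence each in a fully written version. First, the support caveat: when $\supp(\rho)\not\subseteq\supp(\sigma)$ the right-hand side is $+\infty$ and the inequality is vacuous, so one may assume finiteness throughout; one should also note that the limit formula remains consistent ($\to+\infty$) in the degenerate case, and that the logarithm base ($\log_2$ versus $\ln$) only rescales both sides. Second, applying joint convexity under the Haar integral rather than a finite convex combination needs a one-line approximation or lower-semicontinuity remark. Neither affects the validity of the argument.
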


\begin{lem}\label{CKL}(\cite{Li})
Let $\{A_{1}, \ldots, A_{k}\} \subseteq \lin{\mathbb{C}^{n}}$ and
$\{B_{1}, \ldots, B_{k}\} \subseteq \lin{\mathbb{C}^{m}}$ be two
commuting families of Hermitian matrices. Then there exist unitary
matrices $U \in \lin{\complex^n}$ and $V \in \lin{\complex^m}$ such
that $U^\dagger A_{j} U$ and $V B_{j} V^\dagger$ are diagonal
matrices with diagonals $\mathbf{a}_{j} = [a_{1j}, \ldots,
a_{nj}]^\t$ and $\mathbf{b}_{j} = [b_{1j},\ldots,b_{mj}]^\t$,
respectively, for $j=1,\ldots,k$. Then the following conditions are
equivalent:
\begin{enumerate}[(i)]
\item\label{claim:i} There is a super-operator $\Phi \in \CP(\complex^n, \complex^m)$ such that $\Phi(A_j) = B_j(j=1, \ldots, k)$.
\item\label{claim:ii} There is an $m \times n$ non-negative matrix $D = [d_{\mu\nu}]$ such that $[b_{ij}] = D [a_{ij}]$.
\end{enumerate}
Moreover, if the statement \eqref{claim:ii} is satisfied, then
$\Phi$ is bi-stochastic if and only if $D$ is bi-stochastic.
\end{lem}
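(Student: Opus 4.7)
The plan is to prove the two implications of the equivalence explicitly, and then handle the bi-stochastic addendum as a direct consequence of the construction in both directions. Let me set up notation that will do most of the work: write $E'_{\nu} = U\out{e_\nu}{e_\nu}U^\dagger$ and $F'_{\mu} = V^\dagger \out{f_\mu}{f_\mu} V$, where $\{\ket{e_\nu}\}$ and $\{\ket{f_\mu}\}$ are the standard bases. By the simultaneous diagonalization hypothesis, $A_j = \sum_\nu a_{\nu j} E'_\nu$ and $B_j = \sum_\mu b_{\mu j} F'_\mu$, which turns the spectral data $[a_{ij}], [b_{ij}]$ into book-keeping for the action of any super-operator on the two commuting families.

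For \eqref{claim:ii} $\Rightarrow$ \eqref{claim:i} I would simply exhibit Kraus operators. Set
$$
M_{\mu\nu} \defeq \sqrt{d_{\mu\nu}}\; V^\dagger \out{f_\mu}{e_\nu} U^\dagger,
\qquad \Phi(X) \defeq \sum_{\mu,\nu} M_{\mu\nu} X M_{\mu\nu}^\dagger,
$$
which is manifestly completely positive. A short calculation using $U^\dagger A_j U = \op{diag}(a_{\cdot j})$ yields $M_{\mu\nu} A_j M_{\mu\nu}^\dagger = d_{\mu\nu} a_{\nu j} F'_\mu$, so $\Phi(A_j) = \sum_\mu \left(\sum_\nu d_{\mu\nu} a_{\nu j}\right) F'_\mu = \sum_\mu b_{\mu j} F'_\mu = B_j$, exactly using the hypothesis $[b_{ij}] = D[a_{ij}]$.

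For \eqref{claim:i} $\Rightarrow$ \eqref{claim:ii} I would read the entries of $D$ off of $\Phi$ on the eigenprojectors. Namely, define
$$
d_{\mu\nu} \defeq \Braket{f_\mu | V\, \Phi(E'_\nu)\, V^\dagger | f_\mu},
$$
which is nonnegative because $\Phi(E'_\nu) \in \Pos{\complex^m}$. Plugging in $A_j = \sum_\nu a_{\nu j} E'_\nu$, applying $\Phi$, conjugating by $V, V^\dagger$, and reading the $(\mu,\mu)$ entry against $V B_j V^\dagger = \op{diag}(b_{\cdot j})$ gives $b_{\mu j} = \sum_\nu d_{\mu\nu} a_{\nu j}$, which is the desired matrix identity. (Off-diagonal entries of $V\Phi(E'_\nu) V^\dagger$ are irrelevant because the target operators $B_j$ are already diagonalized by $V$.)

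The bi-stochastic moreover follows from the same two formulas. Column sums of $D$ give $\sum_\mu d_{\mu\nu} = \Tr{\Phi(E'_\nu)}$, which equals $1$ for every $\nu$ precisely when $\Phi$ is trace-preserving; row sums give $\sum_\nu d_{\mu\nu} = \bra{f_\mu} V\Phi(\identity_{\complex^n}) V^\dagger \ket{f_\mu}$, which equals $1$ for every $\mu$ precisely when $\Phi$ is unit-preserving. Running these identities both ways (using the explicit $\Phi$ from the first direction and the explicit $D$ from the second) establishes the equivalence. I do not expect a serious obstacle here: the main subtlety is just keeping the two unitary conventions ($U^\dagger A_j U$ versus $V B_j V^\dagger$) straight throughout the Kraus computation.
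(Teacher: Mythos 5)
The paper offers no proof of this lemma at all — it is quoted directly from the cited reference of Li and Poon — so there is no in-paper argument to compare against. Your proof is correct and is essentially the standard one: the Kraus construction $M_{\mu\nu}=\sqrt{d_{\mu\nu}}\,V^\dagger\out{f_\mu}{e_\nu}U^\dagger$ for (ii)$\Rightarrow$(i), the extraction $d_{\mu\nu}=\bra{f_\mu}V\Phi(E'_\nu)V^\dagger\ket{f_\mu}\geqslant 0$ for (i)$\Rightarrow$(ii), and the trace/unitality computations for the bi-stochastic addendum (read, as you correctly do, as an existence equivalence, since neither $\Phi$ nor $D$ is uniquely determined by the other) all go through without gaps.
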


\begin{thrm}\label{th:Sentropy:equality}
Let $T$ be a $N\times N$ stochastic matrix, $\mathbf{p} =
[p_1,p_2,\ldots,p_N]^\t$ and $\mathbf{q} = [q_1,q_2,\ldots,q_N]^\t$
be two $N$-dimensional probability distributions. Then $\rH(T
\mathbf{p}||T\mathbf{q})\leqslant\rH(\mathbf{p}||\mathbf{q})$.
Moreover, for each $1\leqslant k \leqslant N$, $p_k, q_k>0$, then
$\rH(T \mathbf{p}||T \mathbf{q})=\rH(\mathbf{p}||\mathbf{q})$ if and
only if the following conditions hold:
\begin{enumerate}[(i)]
\item $\mathbf{p} = \bigoplus_{k=1}^{K} \mu_k \mathbf{p}_k \ot \mathbf{r}_k$ and $\mathbf{q} = \bigoplus_{k=1}^{K} \nu_{k} \mathbf{q}_k \ot
\mathbf{r}_k$, where $\mathbf{p}_k, \mathbf{q}_k$ denote the
$m_{k}$-dimensional probability vectors, and $\mathbf{r}_k$ denotes
the $n_k$-dimensional probability vectors, and $\mu_{k},\nu_{k}
\geqslant 0, k = 1, \ldots, K; \sum_{k=1}^K \mu_{k} = \sum_{k=1}^K
\nu_{k}=1$, $\sum_{k=1}^K m_k n_k=N$;
\item $T = \bigoplus_{k=1}^{K}\pi_k \ot T_{k}$, $\pi_{k} \in \mathbf{S}_{m_k}$ and $T_k$ is $n_k\times n_k$ stochastic matrix for each $k=1, \ldots, K$.
\end{enumerate}
\end{thrm}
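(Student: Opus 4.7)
My plan is to reduce the classical inequality to the quantum one via diagonal embedding, to verify the easy direction of equality by direct computation, and to tackle the hard (``only if'') direction by combining a log-sum argument with a structure theorem.

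\emph{Monotonicity.} I would place $\mathbf{p}$ and $\mathbf{q}$ on the diagonals of density matrices $\rho = \sum_{i=1}^N p_i \ket{i}\bra{i}$ and $\sigma = \sum_{i=1}^N q_i\ket{i}\bra{i}$ on $\complex^N$, so that $\rH(\mathbf{p}\|\mathbf{q}) = \rS(\rho\|\sigma)$. Next, apply Lemma~\ref{CKL} to the commuting family $\{A_j = E_{jj}\}_{j=1}^N$ (whose diagonal vectors are the standard basis vectors $\mathbf{e}_j$) with targets $\{B_j = \operatorname{diag}(T\mathbf{e}_j)\}_{j=1}^N$ (whose diagonal vectors are $T\mathbf{e}_j$): the $N\times N$ non-negative matrix linking them is $D = T$ itself, which is stochastic, so the lemma furnishes a stochastic $\Phi \in \CP(\complex^N)$ with $\Phi(E_{jj}) = \operatorname{diag}(T\mathbf{e}_j)$. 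Extending by linearity gives $\Phi(\rho) = \operatorname{diag}(T\mathbf{p})$ and $\Phi(\sigma) = \operatorname{diag}(T\mathbf{q})$, and Lemma~\ref{Lind} immediately delivers
\[
\rH(T\mathbf{p}\|T\mathbf{q}) = \rS(\Phi(\rho)\|\Phi(\sigma)) \leqslant \rS(\rho\|\sigma) = \rH(\mathbf{p}\|\mathbf{q}).
\]

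\emph{Sufficiency of (i), (ii).} Assuming the block form, I would compute directly that $T\mathbf{p} = \bigoplus_k \mu_k(\pi_k\mathbf{p}_k)\ot(T_k\mathbf{r}_k)$ and $T\mathbf{q} = \bigoplus_k \nu_k(\pi_k\mathbf{q}_k)\ot(T_k\mathbf{r}_k)$. Because $\mathbf{p}$ and $\mathbf{q}$ share the second tensor factor $\mathbf{r}_k$ in every block, and likewise $T\mathbf{p}$ and $T\mathbf{q}$ share $T_k\mathbf{r}_k$, that factor cancels inside every log-ratio; summing over it contributes $\sum_j (\mathbf{r}_k)_j = \sum_{j'}(T_k\mathbf{r}_k)_{j'} = 1$, and invariance under the permutation $\pi_k$ in the first factor then collapses both $\rH(\mathbf{p}\|\mathbf{q})$ and $\rH(T\mathbf{p}\|T\mathbf{q})$ to the common quantity $\sum_{k,i}\mu_k(\mathbf{p}_k)_i\log\frac{\mu_k(\mathbf{p}_k)_i}{\nu_k(\mathbf{q}_k)_i}$.

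\emph{Necessity of (i), (ii).} For the converse I would start from the log-sum chain
\[
\rH(\mathbf{p}\|\mathbf{q}) = \sum_{i,j} T_{ji}p_i\log\frac{T_{ji}p_i}{T_{ji}q_i} \geqslant \sum_j (T\mathbf{p})_j\log\frac{(T\mathbf{p})_j}{(T\mathbf{q})_j} = \rH(T\mathbf{p}\|T\mathbf{q}),
\]
whose saturation forces the ratio $p_i/q_i$ to be constant on $\{i:T_{ji}>0\}$ for every output $j$. From this I would introduce the equivalence relation on $\{1,\dots,N\}$ under which two indices are related whenever some row of $T$ is non-zero on both, and read off a direct-sum block decomposition of $T$ from the resulting partition. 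Within each block I must then separate a ``distinguishing'' coordinate (on which $p_i/q_i$ varies, producing the permutation $\pi_k$ and the distinct factors $\mathbf{p}_k,\mathbf{q}_k$) from an ``erasable'' coordinate (on which $\mathbf{p}$ and $\mathbf{q}$ are locked to a common marginal $\mathbf{r}_k$, with $T$ acting by a general stochastic $T_k$). The main obstacle will be extracting the precise tensor factorisation $\pi_k\ot T_k$ from the raw ratio-constancy data---pointwise equality of ratios does not by itself produce a product structure. A clean route is to feed the diagonal lift $(\rho,\sigma,\Phi)$ from the monotonicity step into a Hayden--Jozsa--Petz--Winter-type structure theorem for saturation of quantum relative-entropy monotonicity; because $\rho$, $\sigma$ are diagonal in the standard basis and $\Phi$ preserves diagonality, the resulting direct-sum-of-tensor-products decomposition of $\complex^N$ must align with a partition of that basis, and unpacking it recovers conditions (i) and (ii).
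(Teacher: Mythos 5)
Your proposal takes essentially the same route as the paper: the monotonicity is obtained by lifting $\mathbf{p},\mathbf{q}$ to diagonal density matrices, invoking the Li--Poon interpolation lemma (Lemma~\ref{CKL}) to produce a stochastic $\Phi$ carrying $\operatorname{diag}(\mathbf{p})$ to $\operatorname{diag}(T\mathbf{p})$, and applying Lindblad's monotonicity (Lemma~\ref{Lind}); and for the equality case your ``clean route'' is precisely the paper's appeal to the Mosonyi--Petz structure theorem for saturation of relative-entropy monotonicity, with the same (largely unelaborated) final step of reading the classical block/tensor decomposition off the quantum one. Your direct verification of sufficiency and the log-sum sketch are extras the paper omits, but the core argument coincides.
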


\begin{proof}
Let $\rho$, $\sigma$, $\rho'$ and $\sigma'$ be diagonal matrices
with diagonal $\mathbf{p}$, $\mathbf{q}$, $T\mathbf{p}$ and
$T\mathbf{q}$, respectively. Then it follows from the Lemma
\ref{CKL} that there is a stochastic $\Phi\in\CP(\complex^n,
\complex^m)$ such that $\Phi(\rho)=\rho'$,  $\Phi(\sigma) =
\sigma'$. Note that $\rH(\mathbf{p}||\mathbf{q}) =
\rS(\rho||\sigma)$ and $\rH(T\mathbf{p}||T\mathbf{q}) =
\rS(\rho'||\sigma')$, so by Lemma 1 we have
$\rH(T\mathbf{p}||T\mathbf{q}) \leqslant
\rH(\mathbf{p}||\mathbf{q})$. Moreover, if for each $1\leqslant k
\leqslant N$, $p_k, q_k>0$, then the states $\rho$, $\sigma$,
$\Phi(\rho) = \rho'$ and $\Phi(\sigma)=\sigma'$ are faithful. If
$\rH(T\mathbf{p}||T\mathbf{q}) = \rH(\mathbf{p}||\mathbf{q})$, then
$\rS(\rho||\sigma) = \rS(\Phi(\rho)||\Phi(\sigma))$. By a result in
\cite{Mosonyi}
$$
\rS(\Phi(\rho)||\Phi(\sigma)) = \rS(\rho||\sigma)
$$
if and only if the following statements hold:
\begin{enumerate}[(1)]
\item $\cH$ and $\cK$ can be decomposed by the form
$\cH = \bigoplus_{k=1}^{K} \cH^L_k \ot \cH^R_k$,
$\cK = \bigoplus_{k=1}^{K} \cK^L_k \ot \cK^R_k$,
where $\dim\cH^L_k = \dim\cK^L_k$.
\item If $\Phi_k$ is the restriction of $\Phi$ to $\lin{\cH^L_k\ot\cH^R_k}$,
then $\Phi_k \in \trans{\cH^L_k \ot \cH^R_k, \cK^L_k \ot \cK^R_k}$
and it can be factorized into the form $\Phi_k = \mathbf{Ad}_{U_k}
\ot \Phi^R_k$, where $U_k: \cH^L_k \longrightarrow \cK^L_k$ is
unitary operator and $\Phi^R_k \in \trans{\cH^R_k, \cK^R_k}$ is
stochastic, $k=1, \ldots, K$.
\item The state $\rho$ decomposes as
$\rho=\bigoplus_{k=1}^{K}p_k \rho^L_k \ot \omega^R_k$,
$\sigma=\bigoplus_{k=1}^{K}q_k \sigma^L_k\ot\omega^R_k$,
where all the operators are density operators, and
$\set{p_k}_{k=1}^K$ and $\set{q_k}_{k=1}^K$ are probability distributions.
\end{enumerate}
Therefore, it follows that the result can be proved by the above
decomposition of $\Phi$.
\end{proof}

\begin{remark}
In \cite{Zhang}, it was shown that $\rS(\Phi(\rho)) = \rS(\rho)$ if
and only if $\Phi^\dagger\circ\Phi(\rho) = \rho$ while the explicit
construction of the state $\rho$ and the quantum operation $\Phi$
are given. We can employ the mentioned result to give an explicit
construction for $T,\mathbf{p}$ in the identity: $\rH(T\mathbf{p}) =
\rH(\mathbf{p})$. The proof is trivial and is omitted.
\end{remark}

%=====================================================================================================================
\section{Relative entropy of stochastic matrices}
%=====================================================================================================================

In this section, the entropy of stochastic matrices is discussed.
For the entropy of stochastic matrices, more details can be found in
\cite{Slomczynski}. We will go deeper within the entropy concerning
stochastic matrices and derive some conditions on the (strong)
additivity for the stochastic matrices. The notion of relative
entropy of stochastic matrices is introduced by mimicking quantum
relative entropy. Some properties of this concept are listed and the
connection between the entropy of the stochastic quantum operations
and that of stochastic matrices are discussed.

To be specific, for any $N\times N$ stochastic matrix $T =
[t_{\mu\nu}]$, the \emph{weighted entropy} \cite{Slomczynski} of $T$
by a probability vector $\mathbf{p} = [p_{1},\ldots,p_{N}]^{\t}$ is
defined by $\rH_{\mathbf{p}}(T) = \sum_{\nu=1}^{N} p_{\nu}
\rH(\mathbf{t}_{\nu})$, where $T = [\mathbf{t}_1, \ldots,
\mathbf{t}_{N}]$ and $\mathbf{t}_{\nu} = [t_{1\nu}, \ldots,
t_{N\nu}]^{\t}$ is the $\nu$th column vector of $T$. In particular,
$\rH(T) = \frac{1}{N} \sum_{\nu=1}^{N} \rH(\mathbf{t}_{\nu})$ is
defined for $\mathbf{p} = \frac{1}{N}[1, \ldots, 1]^{\t}$.

For any two $N\times N$ stochastic matrices $A$ and $B$, the
\emph{relative entropy} between $A$ and $B$ with respect to a
probability vector $\mathbf{p} = [p_{1}, \ldots, p_{N}]^{\t}$ is
defined by $\rH_{\mathbf{p}}(A||B) = \sum_{\nu=1}^{N}p_{\nu}
\rH(\mathbf{a}_{\nu}||\mathbf{b}_{\nu})$, where $\mathbf{a}_{\nu}$
and $\mathbf{b}_{\nu}$ are the $\nu$th column vectors of $A$ and
$B$, respectively. Similarly, $\rH(A||B) = \frac{1}{N}
\sum_{\nu=1}^{N} \rH(\mathbf{a}_{\nu}||\mathbf{b}_{\nu})$ is defined
for $\mathbf{p} = \frac{1}{N}[1, \ldots, 1]^{\t}$.

The following conclusions are immediate. That is,
$\rH_{\mathbf{p}}(\cdot)$ is a nonnegative and concave function;
$\rH_{\mathbf{p}}(\cdot||\cdot)$ is a jointly convex function.

In what follows, the monotonicity of relative entropy of stochastic
matrices is obtained.

%=====================================================================================================================

\begin{thrm}\label{Mon:stochasticmatrices}
If $T,A,B$ are all $N\times N$ stochastic matrices, then
$$\rH_{\mathbf{p}}(TA||TB) \leqslant \rH_{\mathbf{p}}(A||B),$$
where $\mathbf{p}$ is an $N$-dimensional probability vector.
Moreover, if all the components of $\mathbf{p}$ are positive, then
$$\rH_{\mathbf{p}}(TA||TB)=\rH_{\mathbf{p}}(A||B)$$
if and only if the following conditions hold:
\begin{enumerate}[(i)]
\item $\mathbf{a}_j = \bigoplus_{k=1}^{K}\mu^{(j)}_k \mathbf{p}^{(j)}_k \ot \mathbf{r}_{k}$ and $\mathbf{b}_j = \bigoplus_{k=1}^{K}\nu^{(j)}_k \mathbf{q}^{(j)}_k \ot \mathbf{r}_{k}$,
where $\mathbf{p}^{(j)}_k,\mathbf{q}^{(j)}_k$ denote
$m_{k}$-dimensional probability vectors, and $\mathbf{r}_{k}$ are
$n_{k}$-dimensional probability vectors and $\forall k:
\mu_{k},\nu_{k}\geqslant0, \sum_{k=1}^K \mu^{(j)}_k=\sum_{k=1}^K
\nu^{(j)}_k=1$ for each  $j=1,\ldots,N$, $\sum_{k=1}^K m_k n_k=N$;
\item $T = \bigoplus_{k=1}^{K}\pi_k \ot T_{k}$, where $\pi_{k} \in \mathbf{S}_{m_k}$ and $T_k$ is $n_k\times n_k$ stochastic matrix for each $k$.
\end{enumerate}
\end{thrm}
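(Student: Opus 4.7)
The plan is to reduce both the inequality and its equality characterisation to the probability-vector case handled by Theorem \ref{th:Sentropy:equality}, by reading off the columns of the products. Writing $A=[\mathbf{a}_1,\ldots,\mathbf{a}_N]$ and $B=[\mathbf{b}_1,\ldots,\mathbf{b}_N]$, the $\nu$-th column of $TA$ (resp.\ $TB$) is $T\mathbf{a}_\nu$ (resp.\ $T\mathbf{b}_\nu$), so the definition of the weighted relative entropy gives
\begin{equation*}
\rH_{\mathbf{p}}(TA||TB) \;=\; \sum_{\nu=1}^N p_\nu\,\rH(T\mathbf{a}_\nu||T\mathbf{b}_\nu), \qquad \rH_{\mathbf{p}}(A||B) \;=\; \sum_{\nu=1}^N p_\nu\,\rH(\mathbf{a}_\nu||\mathbf{b}_\nu).
\end{equation*}
Since each $\mathbf{a}_\nu,\mathbf{b}_\nu$ is itself a probability vector, Theorem \ref{th:Sentropy:equality} yields $\rH(T\mathbf{a}_\nu||T\mathbf{b}_\nu)\leqslant\rH(\mathbf{a}_\nu||\mathbf{b}_\nu)$ for every $\nu$, and taking the nonnegative linear combination with weights $p_\nu$ proves the desired inequality.

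For the equality case, assume $p_\nu>0$ for every $\nu$ and that the two weighted relative entropies are equal. Strict positivity of the weights together with the per-column inequality forces $\rH(T\mathbf{a}_\nu||T\mathbf{b}_\nu)=\rH(\mathbf{a}_\nu||\mathbf{b}_\nu)$ individually for each $\nu$. Applying the equality clause of Theorem \ref{th:Sentropy:equality} to each pair $(\mathbf{a}_\nu,\mathbf{b}_\nu)$ separately produces, for every $\nu$, a direct-sum decomposition $T=\bigoplus_k\pi_k\otimes T_k$ (which, being intrinsic to $T$, coincides across columns and delivers condition~(ii)) together with tensor factorisations $\mathbf{a}_\nu=\bigoplus_k\mu_k^{(\nu)}\mathbf{p}_k^{(\nu)}\otimes\mathbf{r}_k^{(\nu)}$ and $\mathbf{b}_\nu=\bigoplus_k\nu_k^{(\nu)}\mathbf{q}_k^{(\nu)}\otimes\mathbf{r}_k^{(\nu)}$ with $\nu$-dependent left factors and a common right factor on each block.

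The hardest step will be aligning these per-column decompositions into the single joint decomposition asserted in condition~(i): the theorem demands right factors $\mathbf{r}_k$ independent of $\nu$, whereas Theorem \ref{th:Sentropy:equality} only supplies $\mathbf{r}_k^{(\nu)}$ one column at a time. To resolve this I plan to apply the equality clause also to the averaged pair $\bar{\mathbf{a}}=\sum_\nu p_\nu\mathbf{a}_\nu$, $\bar{\mathbf{b}}=\sum_\nu p_\nu\mathbf{b}_\nu$ (transferring the equality via joint convexity of $\rH(\cdot||\cdot)$), thereby extracting canonical block-wise right factors $\mathbf{r}_k$ from the average, and then show that each individual $\mathbf{a}_\nu,\mathbf{b}_\nu$ must respect these same $\mathbf{r}_k$, since any deviation would be incompatible with per-column equality once the block decomposition of $T$ is fixed. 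A secondary technical point is that Theorem \ref{th:Sentropy:equality} requires strict positivity of all components of its input vectors, which I would circumvent by restricting to the common support of $\mathbf{a}_\nu$ and $\mathbf{b}_\nu$ for each $\nu$ and padding by zeros after extracting the decomposition.

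For the converse direction, I would substitute the postulated tensor-block decompositions of $T,\mathbf{a}_j,\mathbf{b}_j$ into both sides of the asserted identity. Using additivity of $\rH(\cdot||\cdot)$ under direct sums and tensor products, invariance of the entropy under the permutations $\pi_k$, and the identity $\rH(T_k\mathbf{r}_k||T_k\mathbf{r}_k)=\rH(\mathbf{r}_k||\mathbf{r}_k)=0$, the column-wise relative entropies $\rH(T\mathbf{a}_j||T\mathbf{b}_j)$ and $\rH(\mathbf{a}_j||\mathbf{b}_j)$ collapse to the same expression in the data $\mu_k^{(j)},\nu_k^{(j)},\mathbf{p}_k^{(j)},\mathbf{q}_k^{(j)}$, and the desired equality follows column by column and hence upon weighting by $\mathbf{p}$.
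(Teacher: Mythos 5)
Your proof follows the same route as the paper's: expand $\rH_{\mathbf{p}}(TA||TB)$ column by column, apply Theorem \ref{th:Sentropy:equality} to each pair $(\mathbf{a}_\nu,\mathbf{b}_\nu)$, and use strict positivity of the weights $p_\nu$ to reduce equality of the weighted sums to equality in every column; this part is correct and identical to the original. Where you diverge is in the equality case: the paper simply asserts that conditions (i)--(ii) follow ``immediately'' from Theorem \ref{th:Sentropy:equality}, whereas you correctly flag two genuine issues it passes over --- first, that the per-column applications a priori produce column-dependent right factors $\mathbf{r}_k^{(\nu)}$ and possibly different block decompositions of $T$ (the decomposition $T=\bigoplus_k\pi_k\otimes T_k$ is not unique, so it is not automatic that it ``coincides across columns'' as you briefly claim before proposing the averaging argument), and second, that the equality clause of Theorem \ref{th:Sentropy:equality} assumes all components of its input vectors are positive, which the columns of $A$ and $B$ need not satisfy. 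Your proposed repairs (extracting canonical $\mathbf{r}_k$ from the averaged pair via joint convexity, and restricting to common supports) are only sketched, so the alignment step is not actually carried out in your write-up; but it is not carried out in the paper either, so your attempt is at least as complete as the original and considerably more candid about where the remaining work lies.
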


%=====================================================================================================================

\begin{proof}
By the definition of relative entropy for stochastic matrices, it
follows that
$$\rH_{p}(A||B) = \sum_{j=1}^{N}p_{j}\rH(\mathbf{a}_{j}||\mathbf{b}_{j}),$$
where $\mathbf{a}_{j} = [a_{1j}, \ldots, a_{Nj}]^{\t}$ and
$\mathbf{b}_{j} = [b_{1j}, \ldots, b_{Nj}]^{\t}$ are $j$th columns
of $A$ and $B$, respectively. Now
\begin{eqnarray*}
\rH_{\mathbf{p}}(TA||TB) & = & \sum_{j=1}^{N}p_{j}\rH((TA)_{j}||(TB)_{j}) = \sum_{j=1}^{N}p_{j}\rH(T\mathbf{a}_{j}||T\mathbf{b}_{j})\\
& \leqslant &
\sum_{j=1}^{N}p_{j}\rH(\mathbf{a}_{j}||\mathbf{b}_{j})=\rH_{\mathbf{p}}(A||B).
\end{eqnarray*}
Thus it follows from the above process in the proof that when the
components of $p$ are all positive,
$\rH_{\mathbf{p}}(TA||TB)=\rH_{\mathbf{p}}(A||B)$ if and only if
$\rH(T\mathbf{a}_j||T\mathbf{b}_j)=\rH(\mathbf{a}_j||\mathbf{b}_j)$
for each $j$. By Theorem \ref{th:Sentropy:equality}, the equality
condition can be concluded immediately.
\end{proof}

%=====================================================================================================================

\begin{remark}

Now denote $L^{(k)} =
[\mathbf{p}^{(1)}_{k},\ldots,\mathbf{p}^{(N)}_{k}]$ and $R^{(k)} =
[\mathbf{q}^{(1)}_{k},\ldots,\mathbf{q}^{(N)}_{k}]$. Let $E^{(k)} =
Diag[\mu^{(1)}_k,\ldots,\mu^{(N)}_k]$ and $F^{(k)} =
Diag[\nu^{(1)}_k,\ldots,\nu^{(N)}_k]$. The explicit forms of $A,B$
can be written as \\ \indent
$A = \bigoplus_{k=1}^{K} E^{(k)} L^{(k)} \ot \mathbf{r}_{k}$ and $B = \bigoplus_{k=1}^{K} F^{(k)} R^{(k)} \ot \mathbf{r}_{k}$, \\
where $L^{(k)}$ and $R^{(k)}$ are any stochastic matrices, where
$k=1,\ldots,K$. Furthermore, $\sum_{k=1}^K E^{(k)} = \sum_{k=1}^K
F^{(k)} = \I$.

For a finite collection $\set{B^{(i)}}$ of $N\times N$ stochastic
matrices, denote $\overline{B}=\sum_{i}\lambda_{i}B^{(i)}$, where
$\set{\lambda_{i}}$ is a probability vector. The
\emph{$\chi$-quantity} for $\set{B^{(i)}}$ is defined by
$\chi_{\mathbf{p}}(\set{\lambda_{i}, B^{(i)}}) = \sum_{i}
\lambda_{i}\rH_{\mathbf{p}}(B^{(i)}||\overline{B})$, where
$\mathbf{p}$ is a probability vector. It is easily seen from Theorem
\ref{Mon:stochasticmatrices} that
\begin{enumerate}[(i)]
\item $\chi_{\mathbf{p}}(\set{\lambda_{i},B^{(i)}}) = \rH_{\mathbf{p}}(\sum_{i}\lambda_{i}B^{(i)}) - \sum_{i}\lambda_{i}\rH_{\mathbf{p}}(B^{(i)})$;

\item $\sum_{i}\lambda_{i}\rH_{\mathbf{p}}(B^{(i)}||D) = \chi_{\mathbf{p}}(\{\lambda_{i},B^{(i)}\}) + \rH_{\mathbf{p}}(\overline{B}||D)$, i.e.,
$\sum_{i}\lambda_{i}\rH_{\mathbf{p}}(B^{(i)}||D) =
\sum_{i}\lambda_{i}\rH_{\mathbf{p}}(B^{(i)}||\overline{B})+\rH_{\mathbf{p}}(\overline{B}||D)$,
where $D$ is an $N\times N$ stochastic matrix and $p$ is an
$N$-dimensional probability vector.

\item Assume that $T$ is an $N\times N$ stochastic matrix.
Then $\chi_{\mathbf{p}}(\set{\lambda_{i},TB^{(i)}}) \leqslant
\chi_{\mathbf{p}}(\set{\lambda_{i},B^{(i)}})$ if and only if
$\rH_{\mathbf{p}}(TB) - \rH_{\mathbf{p}}(B)$ is a convex function in
its argument stochastic matrix $B$; moreover,
$\chi_{\mathbf{p}}(\set{\lambda_{i},TB^{(i)}}) \leqslant
\chi_{\mathbf{p}}(\set{\lambda_{i},B^{(i)}})$.
\end{enumerate}

\end{remark}

%=====================================================================================================================

In \cite{Slomczynski}, W. S{\l}omczy\'{n}ski obtained that given any
$N\times N$ stochastic matrices $X, Y, Z$ for which $\mathbf{p}$ is
their common invariant probability vector, i.e. $X\mathbf{p} =
Y\mathbf{p} = Z\mathbf{p} = \mathbf{p}$. Then :
\begin{enumerate}[(i)]
\item $\rH_{\mathbf{p}}(Y) \leqslant \rH_{\mathbf{p}}(XY) \leqslant\rH_{\mathbf{p}}(X) + \rH_{\mathbf{p}}(Y)$;
\item $\rH_{\mathbf{p}}(XYZ) + \rH_{\mathbf{p}}(Y) \leqslant \rH_{\mathbf{p}}(XY) + \rH_{\mathbf{p}}(YZ)$.
\end{enumerate}
The following result is to deal with the saturation of the above two
inequalities.

\begin{prop}
\begin{enumerate}[(i)]
\item If $T\in\mathbf{B}_N$, $A$ is an $N\times N$ stochastic matrix and
$\mathbf{p}$ is an $N$-dimensional probability vector with all
positive components, then $\rH_\mathbf{p}(TA) = \rH_\mathbf{p}(A)$
if and only if $T^\t TA = A$;

\item If $X = X_{L}\ot\pi_{R}$ and $Y = \pi_{L}\ot Y_{R}$
for $X_{L}$ being stochastic matrix of size $m\times m$, $\pi_L \in
\mathbf{S}_m$, $Y_R$ being stochastic matrix of size $n\times n$,
$\pi_R \in \mathbf{S}_n$, then $\rH(XY) = \rH(X) + \rH(Y)$;

\item If $X = \bigoplus_{k=1}^{K} X^L_k \ot \pi^R_k, Y = \bigoplus_{k=1}^{K}Y^L_k \ot Y^R_k$
and $Z = \bigoplus_{k=1}^{K}\pi^L_k \ot Z^R_k$ for $X^L_k$ being
stochastic matrix of size $m_k\times m_k$, $\pi^L_k \in
\mathbf{S}_{m_k}$, $Y^R_k$ being stochastic matrix of size
$n_k\times n_k$, $\pi^R_k \in \mathbf{S}_{n_k}$, then $\rH(XYZ) +
\rH(Y) = \rH(XY) + \rH(YZ)$.
\end{enumerate}
\end{prop}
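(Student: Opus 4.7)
The plan is to dispatch part (i) via the classical analogue of the Zhang characterization cited in the remark following Theorem~\ref{th:Sentropy:equality}, and to dispatch parts (ii) and (iii) by direct calculation once a few additivity properties of the column-averaged entropy $\rH(\cdot)$ have been isolated.

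For part (i), I would encode the bi-stochastic matrix $T$ as a bi-stochastic (unital and trace-preserving) $\Phi_T \in \CP(\complex^N)$ acting on diagonal states by $\Phi_T(\out{j}{j}) = \sum_i t_{ij} \out{i}{i}$. A direct computation of the adjoint yields $\Phi_T^\dagger(\out{k}{k}) = \sum_j t_{kj} \out{j}{j}$, so that for the diagonal state $\rho_\mathbf{a} = \sum_j a_j \out{j}{j}$ one has $\Phi_T^\dagger(\Phi_T(\rho_\mathbf{a})) = \sum_j (T^\t T\mathbf{a})_j \out{j}{j}$. Writing $\rH(T\mathbf{a}) = \rS(\Phi_T(\rho_\mathbf{a}))$ and $\rH(\mathbf{a}) = \rS(\rho_\mathbf{a})$, the Zhang-type equality condition gives $\rH(T\mathbf{a}) = \rH(\mathbf{a})$ iff $T^\t T \mathbf{a} = \mathbf{a}$. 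Because every $p_j > 0$, the telescoped expansion
$$\rH_\mathbf{p}(TA) - \rH_\mathbf{p}(A) = \sum_{j=1}^N p_j \Br{\rH(T\mathbf{a}_j) - \rH(\mathbf{a}_j)}$$
vanishes iff each summand does, iff $T^\t T \mathbf{a}_j = \mathbf{a}_j$ for every column $\mathbf{a}_j$, iff $T^\t T A = A$.

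For parts (ii) and (iii) I would record three elementary facts: (a) \emph{tensor additivity}, $\rH(A \ot B) = \rH(A) + \rH(B)$ for stochastic $A, B$, because columns of $A \ot B$ are tensor products of columns so their Shannon entropies add, and averaging over all $mn$ columns gives the identity; (b) \emph{block additivity}, $\rH(\bigoplus_k M_k) = \sum_k (N_k/N)\rH(M_k)$ since the columns inside the $k$th block are supported entirely on that block; and (c) \emph{permutation invariance}, $\rH(\pi M) = \rH(M\pi) = \rH(M)$ for any permutation matrix $\pi$ (left-multiplication permutes the row labels inside each column, right-multiplication just relabels which column is which), together with $\rH(\pi) = 0$.

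Part (ii) then reduces to $XY = X_L \pi_L \ot \pi_R Y_R$ and
$$\rH(XY) = \rH(X_L\pi_L) + \rH(\pi_R Y_R) = \rH(X_L) + \rH(Y_R) = \rH(X) + \rH(Y).$$
For part (iii), the three matrices share a common block decomposition, so their products stay block-diagonal with $k$th blocks $X^L_k Y^L_k \ot \pi^R_k Y^R_k$, $Y^L_k \pi^L_k \ot Y^R_k Z^R_k$, and $X^L_k Y^L_k \pi^L_k \ot \pi^R_k Y^R_k Z^R_k$. Expanding each entropy block-by-block via (a)--(c), both $\rH(XYZ) + \rH(Y)$ and $\rH(XY) + \rH(YZ)$ reduce to the same symmetric quantity
$$\sum_{k=1}^K \frac{m_k n_k}{N}\Br{\rH(X^L_k Y^L_k) + \rH(Y^R_k Z^R_k) + \rH(Y^L_k) + \rH(Y^R_k)},$$
and the claimed equality follows.

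The main potential obstacle is part (i): unlike (ii) and (iii), which become bookkeeping once the additivity facts are in hand, (i) rests on correctly identifying the adjoint of the classical embedding of $T$ with $T^\t$ acting on diagonal states, and on importing the Zhang-type equality condition from the quantum setting. A more elementary route is also available: $T$ bi-stochastic forces $T\mathbf{a} \prec \mathbf{a}$, while $T^\t T \mathbf{a} = \mathbf{a}$ is equivalent to $\|T\mathbf{a}\|_2 = \|\mathbf{a}\|_2$ (using $\|T\|_{\mathrm{op}} = 1$), so strict Schur convexity of $\|\cdot\|_2^2$ forces $T\mathbf{a}$ to be a permutation of $\mathbf{a}$, whence strict Schur concavity of Shannon entropy closes the loop. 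Once that equivalence is established column-by-column, distributing the weights $p_j$ is immediate.
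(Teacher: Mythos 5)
Your proof is correct and follows essentially the same route as the paper: parts (ii) and (iii) by the same tensor/block/permutation additivity bookkeeping on $\rH(\cdot)$, and part (i) by reducing to columns via positivity of the $p_j$ and invoking the Poritz--Zhang equality condition $\rH(B\mathbf{p})=\rH(\mathbf{p})\iff B^{\t}B\mathbf{p}=\mathbf{p}$ for $B\in\mathbf{B}_N$, which the paper simply cites. The only genuine addition is your self-contained majorization argument for that column-wise criterion (strict Schur convexity of $\Norm{\cdot}_2^2$ together with strict Schur concavity of $\rH$), which makes part (i) independent of the cited references; that is a nice bonus but does not change the structure of the proof.
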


\begin{proof}
\begin{enumerate}[(i)]
\item Since each component $p_j$ is positive, it follows that
$\rH_\mathbf{p}(TA) = \rH_\mathbf{p}(A)$ if and only if
$\rH(T\mathbf{a}_j) = \rH(\mathbf{a}_j)$ for every $j$, where all
$\mathbf{a}_j$'s are the $j$th column vector of $A$. By the result
in \cite{Alan,Zhang}, i.e. for $B\in\mathbf{B}_N$, $\rH(B\mathbf{p})
= \rH(\mathbf{p})$ if and only if $B^\t B\mathbf{p} = \mathbf{p}$,
we get that $\rH(T\mathbf{a}_j) = \rH(\mathbf{a}_j)$ for every $j$
if and only if $T^\t T\mathbf{a}_j = \mathbf{a}_j$ for all $j$; that
is, the proof is concluded.

\item Since $XY = X_L\pi_L \ot \pi_RY_R$, it follows that
$\rH(XY)  =  \rH(X_L\pi_L \ot \pi_RY_R) = \rH(X_L) + \rH(Y_R)$,
which implies the conclusion.

\item Since
$$XYZ = \bigoplus_{k=1}^{K}X^L_kY^L_k\pi^L_k  \ot \pi^R_kY^R_k Z^R_k,$$
it follows that \begin{eqnarray*}
\rH(XYZ) & = & \sum_k \lambda_k [\rH(X^L_kY^L_k) + \rH(Y^R_k Z^R_k)],\\
\rH(XY) & = & \sum_k \lambda_k [\rH(X^L_kY^L_k) + \rH(Y^R_k)],\\
\rH(Y Z) & = & \sum_k \lambda_k [\rH(Y^L_k) + \rH(Y^R_k Z^R_k)],\\
\rH(Y)  & = &  \sum_k \lambda_k [\rH(Y^L_k) + \rH(Y^R_k)],
\end{eqnarray*}
where $\lambda_k=m_kn_k/N$ and $\sum_k m_kn_k = N$. Combining all
these expressions gives the desired result.
\end{enumerate}
\end{proof}

%=====================================================================================================================

\section{The relationship between quantum operations and bi-stochastic matrices}

%=====================================================================================================================

For any CP super-operators $\Phi$ and $\Psi$, with corresponding
Kraus representations: $\Phi = \sum_{i}\mathbf{Ad}_{M_{i}}$ and
$\Psi = \sum_{j}\mathbf{Ad}_{N_{j}}$, respectively, it is easily
seen that $\Phi \ot \Psi = \sum_{i,j}\mathbf{Ad}_{M_{i} \ot N_{j}}$,
and $\jam{\Phi \ot \Psi} = \jam{\Phi}\ot \jam{\Psi}$; denote
$\Psi^\t = \sum_{j}\mathbf{Ad}_{N^{\t}_{j}}$. Then
$$
\jam{\Phi\circ\Psi} = \Phi \ot \I(\jam{\Psi}) =
\I\ot\Psi^\t(\jam{\Phi})
 = \Phi \ot \Psi^\t(\col{\I}\row{\I}).
$$

Let $\Phi,\Psi\in\CP(\cH)$ be stochastic. The \emph{relative
entropy} between $\Phi$ and $\Psi$ is defined by
$$
\rS(\Phi||\Psi)\defeq\rS(\rho(\Phi)||\rho(\Psi))
$$
If
$\Lambda\in\CP(\cH)$ is also bi-stochastic, then
$$
\rS(\Lambda\circ\Phi||\Lambda\circ\Psi) \leqslant \rS(\Phi||\Psi).
$$
This can be seen easily from the Lemma \ref{Lind}. Indeed,
$$
\rS(\Lambda\ot\I_{\lin{\cH}}(\rho(\Phi))||\Lambda\ot\I_{\lin{\cH}}(\rho(\Psi)))
\leqslant \rS(\rho(\Phi)||\rho(\Psi)) = \rS(\Phi||\Psi)
$$
since
$\Lambda\ot\I_{\lin{\cH}}$ is bi-stochastic whenever $\Lambda$ is
bi-stochastic.

Assume that $\Phi$ is a CP stochastic super-operator for which the
Kraus decomposition can be written as
$\Phi=\sum_{\mu}\mathbf{Ad}_{T_{\mu}}$. Define the \emph{Kraus
matrix} \cite{Bengtsson} for $\Phi$ as
$$
B(\Phi):=\sum_{\mu}T_{\mu}\bullet T^{\ast}_{\mu},
$$
where $\bullet$
denotes Shur product of matrices and $\ast$ means that entry-wise
complex conjugate of a matrix. Hence the $(i,j)$th entry $b_{ij}$ of
$B$ can be described by
$b_{ij}=\sum_{\mu}t^{\mu}_{ij}\overline{t^{\mu}_{ij}}$, where
$T_{\mu}=[t^{\mu}_{ij}]$ and the bar means the complex conjugate of
complex numbers.

For any two Hermitian matrices $X$ and $Y$, $X$ is \emph{majorized}
by $Y$, denoted by $X\prec Y$, if there is a CP bi-stochastic
super-operator $\Phi$ such that $X = \Phi(Y)$. The well-known Shur's
theorem states that $Diag(X)\prec X$ for any square matrix $X$, see
\cite{Bengtsson}. Thus for any bi-stochastic quantum operation
$\Lambda$, it follows that $\Lambda(\rho)\prec\rho$. Moreover,
$$
\jam{\Phi\circ\Psi}\prec\jam{\Phi},\quad
\jam{\Phi\circ\Psi}\prec\jam{\Psi}
$$
for any two bi-stochastic
quantum operations $\Phi$ and $\Psi$.

%=====================================================================================================================
In what follows, some properties of Kraus matrices are listed below.
%=====================================================================================================================

\begin{prop}\label{extreme point}
\begin{enumerate}[(i)]

\item For a given (bi-)stochastic super-operator $\Phi\in\CP(\cH)$, $B(\Phi)$ is a (bi-)stochastic matrix.

\item $B(\Phi)$ is well-defined, i.e., it is independent of the different Kraus decompositions for $\Phi$ and just depends on $\Phi$ itself.

\item $B(\Phi)$ is a convex function with respect to argument $\Phi$, i.e., \\ \indent $B(t\Phi_1+(1-t)\Phi_2)=tB(\Phi_1)+(1-t)B(\Phi_2)$
for any $\Phi_1,\Phi_2$ and all $t\in[0,1]$.

\item Denote $\sM(B)=\set{\Phi| \Phi\in\trans{\cH} \mbox{\ is CP stochastic super-operator and\ } B(\Phi) = B}$. Then $\sM(B)$ is a nonempty convex set.

\item $B(\Theta_1\ot\Theta_2) = B(\Theta_1)\ot B(\Theta_2)$ for any stochastic quantum operations $\Theta_1$ and $\Theta_2$.

\item Assume that $\cH$ and $\cK$ are $M$ and $N$ dimensional Hilbert spaces, respectively.
If $\Lambda \in \trans{\cH \ot \cK}$ is CP and stochastic  and can
be described by $\Lambda = \sum_k \lambda_k \Phi_k \ot \Psi_k$,
where $\set{\Phi_k} \in \trans{\cH}$ and $\set{\Psi_k} \in
\trans{\cK}$ are two collections of CP stochastic super-operators,
then $B(\Lambda) = \sum_k \lambda_k B(\Phi_k) \ot B(\Psi_k)$, where
$\lambda=\set{\lambda_k}_k$ is a finite probability vector.
\end{enumerate}
\end{prop}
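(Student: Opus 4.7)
The whole proposition is driven by one basis-level observation: for any Kraus representation $\Phi=\sum_\mu\mathbf{Ad}_{T_\mu}$, one has
$b_{ij}=\sum_\mu |t^\mu_{ij}|^2=\Innerm{i}{\Phi\pa{\out{j}{j}}}{i}.$
The rightmost expression depends only on $\Phi$ and the chosen computational basis, and this single identity is what makes the argument go. It settles (ii) immediately, since no Kraus decomposition appears on the right. It also gives (i): for stochastic $\Phi$ the column sums satisfy $\sum_i b_{ij}=\Tr{\Phi\pa{\out{j}{j}}}=1$, and if in addition $\Phi$ is bi-stochastic (hence unital) the row sums satisfy $\sum_j b_{ij}=\Innerm{i}{\Phi(\I)}{i}=1$. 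Since $\Phi\mapsto\Innerm{i}{\Phi\pa{\out{j}{j}}}{i}$ is a linear functional of $\Phi$, the assignment $\Phi\mapsto B(\Phi)$ is in fact linear, which yields (iii) in the stronger form of linearity (and a fortiori convexity).

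For (iv), convexity of $\sM(B)$ drops out of the linearity from (iii). For non-emptiness, given any stochastic $B=[b_{ij}]$ I would produce the explicit Holevo-type channel
$\Phi_B=\sum_{i,j}\mathbf{Ad}_{\sqrt{b_{ij}}\,\out{i}{j}},$
check trace preservation via $\sum_{i,j}b_{ij}\out{j}{j}=\I$ (which is just the column-stochasticity of $B$), and verify $B(\Phi_B)=B$ by the entrywise computation $\sum_{i,j}b_{ij}\delta_{ik}\delta_{jl}=b_{kl}$.

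For (v), I would fix Kraus forms $\Theta_1=\sum_\mu\mathbf{Ad}_{S_\mu}$ and $\Theta_2=\sum_\nu\mathbf{Ad}_{T_\nu}$ so that $\Theta_1\ot\Theta_2=\sum_{\mu,\nu}\mathbf{Ad}_{S_\mu\ot T_\nu}$, and then invoke the fact that Schur product distributes over the tensor product entrywise, $(A\ot B)\bullet(C\ot D)=(A\bullet C)\ot(B\bullet D)$, to factor the double sum as $\pa{\sum_\mu S_\mu\bullet S_\mu^*}\ot\pa{\sum_\nu T_\nu\bullet T_\nu^*}=B(\Theta_1)\ot B(\Theta_2)$. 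Part (vi) is then an immediate combination of this tensor-multiplicativity with the linearity of (iii): $B(\Lambda)=\sum_k\lambda_k B(\Phi_k\ot\Psi_k)=\sum_k\lambda_k B(\Phi_k)\ot B(\Psi_k)$.

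The only piece with any delicacy is (ii)/well-definedness; once it is recast through the matrix elements $\Innerm{i}{\Phi\pa{\out{j}{j}}}{i}$, it becomes a one-liner and every remaining part reduces to a short direct calculation, so no single step ends up being a real obstacle.
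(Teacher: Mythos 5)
Your proof is correct, and for parts (i)--(iii) it takes a genuinely different route from the paper. The paper proves well-definedness (ii) by invoking the unitary freedom of Kraus representations ($E_{\mu}=\sum_{\nu}u_{\mu\nu}F_{\nu}$ for a unitary $[u_{\mu\nu}]$) and cancelling the unitary inside the double sum $\sum_{\mu}E_{\mu}\bullet E_{\mu}^{\ast}$; it proves (iii) by another direct Kraus computation with the rescaled operators $\sqrt{t}\,S_{\mu}$ and $\sqrt{1-t}\,T_{\nu}$; and it dismisses (i) and the non-emptiness in (iv) as trivial or clear. Your single identity $b_{ij}=\Innerm{i}{\Phi\pa{\out{j}{j}}}{i}$ subsumes all of this at once: well-definedness is immediate because the right-hand side never mentions a Kraus decomposition, part (i) reduces to trace preservation (column sums) and unitality (row sums), and part (iii) is upgraded to genuine linearity of $\Phi\mapsto B(\Phi)$, of which convexity is a special case. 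You also supply the one ingredient the paper omits: an explicit witness $\Phi_{B}=\sum_{i,j}\mathbf{Ad}_{\sqrt{b_{ij}}\out{i}{j}}$ showing $\sM(B)\neq\emptyset$ for every stochastic $B$, with trace preservation following from column-stochasticity, exactly matching the paper's convention. Parts (v) and (vi) coincide with the paper's argument (Schur product distributing over tensor products, then linearity). The trade-off is minor: the paper's unitary-freedom computation stays entirely at the level of Kraus operators and so adapts to other Schur-type constructions built from Kraus families, whereas your matrix-element formula is shorter, makes the (basis-dependent) meaning of $B(\Phi)$ transparent, and turns every remaining part into a one-line calculation.
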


%===================================================================================================================

\begin{proof}
\begin{enumerate}[(i)]

\item The proof is trivial.

\item Assume that $\Phi = \sum_{\mu=1}^{N^2}\mathbf{Ad}_{E_{\mu}} = \sum_{\nu=1}^{N^2}\mathbf{Ad}_{F_{\nu}}$. By the unitary freedom of quantum operations, there is a $N^2\times N^2$ unitary matrix $U=[u_{\mu\nu}]$ such that $E_{\mu} = \sum_{\nu=1}^{N^2}u_{\mu\nu}F_{\nu}$. Then
\begin{eqnarray*}
\sum_{\mu=1}^{N^2}E_{\mu} \bullet E^{\ast}_{\mu} & = &
\sum_{\mu=1}^{N^2}(\sum_{\nu=1}^{N^2}u_{\mu\nu}F_{\nu}) \bullet
(\sum_{\kappa=1}^{N^2}u_{\mu\kappa}F_{\kappa})^{\ast} =
\sum_{\nu,\kappa=1}^{N^2}(\sum_{\mu=1}^{N^2}u_{\mu\nu}\overline{u}_{\mu\kappa})F_{\nu}\bullet
F^{\ast}_{\kappa}\\ & = &
\sum_{\nu,\kappa=1}^{N^2}\delta_{\nu\kappa}F_{\nu}\bullet
F^{\ast}_{\kappa} = \sum_{\nu=1}^{N^2}F_{\nu}\bullet F^{\ast}_{\nu},
\end{eqnarray*}
which implies that $B(\Phi)$ is well-defined.

\item Choose any two stochastic quantum operations $\Phi_1$ and $\Phi_2$
with their corresponding Kraus decomposition: $\Phi_1 =
\sum_{\mu}\mathbf{Ad}_{S_{\mu}}$ and $\Phi_2 =
\sum_{\nu}\mathbf{Ad}_{T_{\nu}}$. Let $t\in[0,1]$. Then the Kraus
decomposition for $t\Phi_1 + (1-t)\Phi_2$ is $t\Phi_1 + (1-t)\Phi_2
= t\sum_{\mu}\mathbf{Ad}_{S_{\mu}} +
(1-t)\sum_{\nu}\mathbf{Ad}_{T_{\nu}}$, which implies that the Kraus
matrix for $t\Phi_1 + (1-t)\Phi_2$ is
\begin{eqnarray*}
B(t\Phi_1 + (1-t)\Phi_2) & = & \sum_{\mu}(\sqrt{t}S_{\mu}) \bullet (\sqrt{t}S^{\ast}_{\mu}) + \sum_{\nu}(\sqrt{1-t}T_{\nu}) \bullet (\sqrt{1-t}T^{\ast}_{\nu})\\
& = & t\sum_{\mu}S_{\mu}\bullet S^{\ast}_{\mu} +
(1-t)\sum_{\nu}T_{\nu} \bullet T^{\ast}_{\nu}\\ & = & tB(\Phi_1) +
(1-t)B(\Phi_2).
\end{eqnarray*}

\item If $\Psi_1,\Psi_2 \in \sM(B)$, it follows from the result of (iii) that $B(t\Psi_1+(1-t)\Psi_2)=tB(\Psi_1)+(1-t)B(\Psi_2)=B$ since $B(\Psi_1)=B(\Psi_2)=B$, which implies $t\Psi_1+(1-t)\Psi_2\in\sM(B)$. The fact that $\sM(B)$ is not empty is clear.

\item Let the Kraus decompositions for $\Theta_1$ and $\Theta_2$ are $\Theta_1=\sum_{m}\mathbf{Ad}_{S_{m}}$ and
$\Theta_2=\sum_{\mu}\mathbf{Ad}_{T_{\mu}}$. Then
$\Theta_1\ot\Theta_2=\sum_{m,\mu}Ad_{S_{m}\ot T_{\mu}}$. Now
\begin{eqnarray*}
B(\Theta_1\ot\Theta_2) & = & \sum_{m,\mu}(S_{m}\ot T_{\mu})\bullet(S^{\ast}_{m}\ot T^{\ast}_{\mu}) = (\sum_{m}S_{m}\bullet S^{\ast}_{m}) \ot (\sum_{\mu}T_{\mu}\bullet T^{\ast}_{\mu})\\
& = & B(\Theta_1) \ot B(\Theta_2).
\end{eqnarray*}

\item It follows trivially from combining the above conclusions (iii) and (v).
\end{enumerate}
\end{proof}

%=====================================================================================================================

\begin{remark}
In the above Proposition \ref{extreme point}(iv), it is known that
$\sM(B)$ is a nonempty convex set. In fact, it is also compact. Thus
the question naturally arises: what are the extreme points of
$\sM(B)$? Note that in \cite{Parthasarathy}, Parthasarathy gave a
characterization of extremal quantum states of composite systems
with fixed marginal states. Subsequently, Rudolph gave an another
characterization about it in \cite{Rudolph}. Therefore, our question
can be described in terms of the language as in
\cite{Parthasarathy,Rudolph} under the additional condition that the
diagonals of the Jamio{\l}kowski state are fixed.
\end{remark}

%=====================================================================================================================

\begin{remark}

Generally speaking, $B(\Phi\circ\Psi)\neq B(\Phi)B(\Psi)$ for two
stochastic super-operators $\Phi,\Psi \in \CP(\cH)$. But there is a
class of specific examples in which the inequality sign can be
replaced by an equal sign. Indeed, assume that $\cH$ is an
$N$-dimensional Hilbert space and $\Phi, \Psi \in \CP(\cH)$ for
which
$$
\jam{\Phi} = \sum_{m,\mu=1}^N p_{m\mu}\out{m\mu}{m\mu}\quad\text{and}\quad\jam{\Psi} = \sum_{m\mu=1}^N q_{m,\mu}\out{m\mu}{m\mu}.
$$
By the stochasticity, $\sum_{\mu=1}^{N}p_{m\mu}=1$ and
$\sum_{m=1}^{N}p_{m\mu}=1$; $\sum_{\mu=1}^{N}q_{m\mu}=1$ and
$\sum_{m=1}^{N}q_{m\mu}=1$. Then $\jam{\Phi\circ\Psi} =
\sum_{m,\mu=1}^{N}[B(\Phi)B(\Psi)]_{m\mu}\out{m\mu}{m\mu}$, where
$B(\Phi) = [p_{m\mu}]$ and $B(\Psi) = [q_{m\mu}]$, which implies
that
$$
B(\Phi\circ\Psi) = B(\Phi)B(\Psi),\quad B(\Psi\circ\Phi) =
B(\Psi)B(\Phi)
$$
and
$$
\rS(\Phi\circ\Psi) = \rH(B(\Phi\circ\Psi)) + \log N,\quad
\rS(\Psi\circ\Phi) = \rH(B(\Psi\circ\Phi)) + \log N.
$$
Now
$$
\rS^{\mathbf{map}}(\Phi) + \rS^{\mathbf{map}}(\Psi) -
\rS^{\mathbf{map}}(\Phi\circ\Psi) = \rH(B(\Phi)) + \rH(B(\Psi)) -
\rH(B(\Psi)B(\Phi)) + \log N.
$$

This fact shows
that if both $J(\Phi)$ and $J(\Psi)$ are diagonal, then
$B(\Phi\circ\Psi) = B(\Phi)B(\Psi)$. There is a question which can
be formulated as follows: what is a sufficient and necessary
condition for $B(\Phi\circ\Psi) = B(\Phi)B(\Psi)$ for stochastic
super-operators $\Phi,\Psi\in\CP(\cH)$. It is also \emph{conjectured}
that $\rH_{\mathbf{p}}(B(\Phi\circ\Psi)) \leqslant
\rH_{\mathbf{p}}(B(\Phi)B(\Psi))$ for any stochastic super-operators
$\Phi,\Psi\in\CP(\cH)$, where $\mathbf{p}$ is any $N$-dimensional
probability vector.
\end{remark}

%=====================================================================================================================

\begin{prop} Assume that $\cH$ is a $N$-dimensional Hilbert space.
\begin{enumerate}[(i)]
\item If $\Phi \in \CP(\cH)$ is stochastic, then:
$\rS^{\mathbf{map}}(\Phi) \leqslant \rH(B(\Phi))+\log N$;
\item If $\Phi,\Psi \in \CP(\cH)$ is stochastic, then: $\rH(B(\Phi)||B(\Psi)) \leqslant \rS(\Phi||\Psi)$;
\end{enumerate}
\end{prop}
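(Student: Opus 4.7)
The plan is to identify the Kraus matrix $B(\Phi)$ with the diagonal of the Jamio\l{}kowski operator $\jam{\Phi}$ in the product computational basis of $\cH\ot\cH$, and then invoke two standard monotonicity properties of the (relative) von Neumann entropy under the completely dephasing channel. Concretely, pick any Kraus decomposition $\Phi = \sum_\mu \mathbf{Ad}_{T_\mu}$ and expand $\jam{\Phi} = \sum_\mu (T_\mu \ot \I)\col{\I}\row{\I}(T_\mu^\dagger \ot \I)$; a direct computation shows that in the basis $\{\ket{i}\ket{j}\}$ of $\cH \ot \cH$, the $(ij,ij)$-diagonal entry of $\jam{\Phi}$ equals $\sum_\mu \abs{(T_\mu)_{ij}}^2 = B(\Phi)_{ij}$. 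Let $\Delta \in \CP(\cH \ot \cH)$ denote the completely dephasing map $X \mapsto \sum_{ij}\Innerm{ij}{X}{ij}\out{ij}{ij}$; it is bi-stochastic, and $\Delta(\tfrac{1}{N}\jam{\Phi})$ is the diagonal density operator with spectrum $\set{B(\Phi)_{ij}/N}_{i,j=1}^N$.

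For part (i), I would use the standard inequality $\rS(\rho) \leqslant \rS(\Delta(\rho))$, which follows from Klein's inequality $\rS(\rho || \Delta(\rho)) \geqslant 0$ together with the observation that $\rho$ and $\Delta(\rho)$ share the same diagonal, so that $\Tr{\rho\log\Delta(\rho)} = \Tr{\Delta(\rho)\log\Delta(\rho)}$. Applying this to $\rho = \tfrac{1}{N}\jam{\Phi}$ yields
$$\rS^{\mathbf{map}}(\Phi) \leqslant \rS\Pa{\Delta\Pa{\tfrac{1}{N}\jam{\Phi}}} = -\sum_{i,j=1}^N \frac{B(\Phi)_{ij}}{N}\log\frac{B(\Phi)_{ij}}{N}.$$
Since $B(\Phi)$ is column-stochastic we have $\sum_{i,j} B(\Phi)_{ij} = N$, and separating the $\log N$ contribution together with the paper's definition $\rH(B) = -\tfrac{1}{N}\sum_{\mu,\nu}B_{\mu\nu}\log B_{\mu\nu}$ turns the right-hand side into $\rH(B(\Phi)) + \log N$, which is exactly the claimed bound.

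For part (ii), I would apply Lemma~\ref{Lind} to the stochastic map $\Delta$, obtaining
$$\rS\Pa{\Delta(\rho(\Phi))\,||\,\Delta(\rho(\Psi))} \leqslant \rS(\rho(\Phi)\,||\,\rho(\Psi)) = \rS(\Phi||\Psi).$$
The left-hand side is the relative entropy of two diagonal density operators and therefore coincides with the classical relative entropy between the probability vectors $B(\Phi)/N$ and $B(\Psi)/N$ on $\set{1,\ldots,N}^2$. Inside the logarithm the $1/N$ factors cancel, and using the paper's definition $\rH(A||B) = \tfrac{1}{N}\sum_{\mu,\nu}A_{\mu\nu}\log(A_{\mu\nu}/B_{\mu\nu})$ this quantity equals $\rH(B(\Phi) || B(\Psi))$ exactly.

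Neither step presents a genuine obstacle; both inequalities amount to Klein/Lindblad monotonicity composed with the clean identification of $B(\Phi)$ as the diagonal of $\jam{\Phi}$. The only point requiring care is the normalization bookkeeping: $B(\Phi)$ has total entry-sum $N$ (it is column-stochastic) whereas $\tfrac{1}{N}\jam{\Phi}$ is a unit-trace state, which is what produces the additive $\log N$ in part (i) and the clean cancellation of $1/N$ factors in part (ii).
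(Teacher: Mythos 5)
Your proposal is correct and follows essentially the same route as the paper: both rest on the identification $B(\Phi)_{ij}=\Innerm{ij}{\jam{\Phi}}{ij}$ and on monotonicity of (relative) entropy under the completely dephasing channel, with part (ii) being word-for-word the paper's application of Lemma~\ref{Lind}. The only cosmetic difference is in part (i), where you derive $\rS(\rho)\leqslant\rS(\Delta(\rho))$ from Klein's inequality while the paper invokes Schur's theorem ($Diag(\rho)\prec\rho$) and Schur concavity of the entropy; these are two standard proofs of the same fact.
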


%=====================================================================================================================

\begin{proof}
\begin{enumerate}[(i)]
\item By Shur's lemma, it follows that $Diag(J(\Phi))\prec J(\Phi)$ which is equivalent to $Diag(\rho(\Phi))\prec \rho(\Phi)$.
Since $\Innerm{m}{B(\Phi)}{\mu} = \Innerm{m\mu}{J(\Phi)}{m\mu}$, it
can be seen that
$$
\rS^{\mathbf{map}}(\Phi) = \rS(\rho(\Phi)) \leqslant
\rS(Diag(\rho(\Phi))) = \rH(B(\Phi)) + \log N.
$$
Furthermore, $\rS^{\mathbf{map}}(\Phi) = \rH(B(\Phi)) + \log N$ when
$\Phi$ is represented by a diagonal dynamical matrix $J(\Phi)$.

\item There exists a CP bi-stochastic super-operator $\Lambda$ such that $\Lambda(\rho)=Diag(\rho)$ since
$Diag(\rho)\prec\rho$ which follows from Shur's lemma. Thus it follows from Lemma \ref{Lind} that
\begin{eqnarray*}
\rS(\Phi||\Psi)&=&\rS(\rho(\Phi)||\rho(\Psi))\geqslant\rS(Diag[\rho(\Phi)]||Diag[\rho(\Psi)])\\
&=&\frac1N\sum_{j=1}^{N}\rH(B(\Phi)_j||B(\Psi)_j)=\rH(B(\Phi)||B(\Psi)).
\end{eqnarray*}
\end{enumerate}
\end{proof}

\begin{remark}
Recently, Roga \emph{et al.} studied entropic uncertainty relations
for quantum operations in \cite{Roga2012} and related bounds on the
map entropy were also obtained.

Let $\widetilde{s}(\Phi) = \rH(B(\Phi)) - \rS^{\mathbf{map}}(\Phi)$
for stochastic super-operator $\Phi \in \CP(\cH)$. Then for a
collection $\set{\Phi_k}$ of stochastic super-operator in $\CP(\cH)$
such that $\Phi = \sum_k \lambda_k\Phi_k$,
$\chi(\set{\lambda_{k},B(\Phi_{k})})\leqslant\chi(\set{\lambda_{k},\Phi_{k}})$
if and only if
$\widetilde{s}(\sum_{k}\lambda_{k}\Phi_{k})\leqslant\sum_{k}\lambda_{k}\widetilde{s}(\Phi_{k})$;
i.e., $\widetilde{s}(\Phi)$ is a convex function in its argument
$\Phi$.
\end{remark}

%=============================================================================%
\subsection*{Acknowledgement}
%=============================================================================%

The author would like to thank the anonymous referee(s) for valuable
comments and to thank KMR Audenaert for pointing out some misprints
appearing in this paper. LZ is also grateful for funding from
Hangzhou Dianzi University (Grant No. KYS075612038).

%=====================================================================================================================

\end{document}